\documentclass[12pt, a4paper, reqno]{amsart}
\usepackage{mathrsfs}
\usepackage{tabu}
\usepackage{bbm}
\usepackage{amsmath,amscd,amssymb,latexsym}
\usepackage[all]{xy}
\usepackage{xcolor}

\input xypic

\textwidth=152mm \oddsidemargin=4mm
\evensidemargin=\oddsidemargin
\textheight=238mm \advance\voffset-20mm \headheight=10mm \headsep=8mm


\DeclareMathVersion{can}
\DeclareMathAlphabet{\can}{OT1}{cmss}{m}{n}
\vfuzz2pt 
\hfuzz2pt 
\newtheorem{thm}{Theorem}[section]
\newtheorem{cor}[thm]{Corollary}

\newtheorem{prop}[thm]{Proposition}

\newtheorem{exa}[thm]{Example}
\theoremstyle{definition}

\theoremstyle{fact}

\theoremstyle{conjecture}

\numberwithin{equation}{section}


\newcommand{\Tr}{\operatorname{Tr}}

\newcommand{\mwang}[1]{{{\color{purple}#1}}}

\begin{document}
\title[] { Further improvement on index bounds}
\author[Y. Wu]{Yansheng Wu}
\address{\rm School of Computer Science, Nanjing University of Posts and Telecommunications, Nanjing 210023, P. R. China;  Shanghai Key Laboratory of Trustworthy Computing, East China Normal University, Shanghai 200062, P. R. China}
 \email{yanshengwu@njupt.edu.cn}

\author[Y. Lee]{ Yoonjin Lee}
\address{\rm Department of Mathematics, Ewha Womans
University, Seoul, 03760, Republic of Korea}
\email{yoonjinl@ewha.ac.kr}

 \author[Q. Wang]{Qiang Wang}
 \address{\rm School of Mathematics and Statistics, Carleton University, Ottawa, ON K1S 5B6, Canada}
 \email{wang@math.carleton.ca}


\subjclass[2010]{11T24}

\keywords{index bounds, polynomials, character sums}

\thanks{
{\tiny Y. Wu was supported by the National Natural Science Foundation of China (Grant No. 12101326), the Natural Science Foundation of Jiangsu Province (Grant No. BK20210575 ).   
 Y. Lee is a corresponding author and supported by
the National Research Foundation of Korea (NRF) grant funded by the Korea government (MEST)(NRF-2017R1A2B2004574) and also by the Basic Science Research Program through the National Research Foundation of Korea (NRF) funded by the Ministry of Education (Grant No. 2019R1A6A1A11051177). The research of Qiang Wang is partially supported by NSERC of Canada (RGPIN-2017-06410). }}

\begin{abstract}  
In this paper we obtain further improvement of index bounds for character sums of polynomials over finite fields. We present some examples, which show that our new bound is  an improved bound compared to both the Weil bound and the index bound given by Wan and Wang.
As an application, we count the number of all the solutions of some algebraic curves by using our result. 


\end{abstract}
\maketitle

\section{Introduction}

Let  $p$  be  a prime number and  $ \Bbb F_q$ be a finite field, where $q=p^m$ for some positive integer $m$.  Let  $\psi: \Bbb F_q \to \mathbb{C} ^*$ be a nontrivial additive character and  $f(x)\in \Bbb F_q[x]$. The following sums
$$\sum_{x\in \Bbb F_q}\psi(f(x))$$
are referred to as the {\it Weil sums} \cite{LN}. Throughout this paper,  we always view a polynomial $f(x)\in\Bbb F_q[x] $ as a mapping over $\Bbb F_q$.  In this sense the degree of a polynomial in $\Bbb F_q[x]$ should be controlled  by $q-1$ when we consider  Weil sums.

Given a polynomial $f(x)\in\Bbb F_q[x] $ of positive degree $n$ with  $\gcd(n,q)=1$, we have
\begin{equation}\bigg|\sum_{x\in \Bbb F_q}\psi(f(x))\bigg|\le (n-1)\sqrt{q}.\end{equation}
 The upper bound in  (1.1) is well known as the {\it Weil bound}.

The Weil bound for character sums has many applications  in mathematics, theoretical computer science, and information theory etc.  
The Weil bound is trivial when the degree of the polynomial  is bigger than $\sqrt{q}$. 
Some progress on improvement to the Weil bound has been made as follows. 

(1) Using {\it Deligne's bound} for exponential sums in several variables, Gillot et al. \cite{G, GL} provided a new bound for $|\sum_{x\in \Bbb F_{q^m}}\psi(f(x))|$ such that for any $b\in \Bbb F_{q^m}^*$ the $q$-degree of  $f(x)=bx^d+g(x)\in \Bbb F_{q^m}[x]$ only depends on the leading term $bx^d$ {\color{orange};  the {\it $q$-degree} of a nonzero polynomial $f$ over $\Bbb F_{q^m}$ is defined by \begin{equation*}\mbox{deg}_{q}(f)=\underset{}{\mbox{max}}\{d_{0}+d_{1}+\cdots+d_{m-1}:  d\in \mbox{supp}(f)\}, \end{equation*} where  $d=d_{0}+d_{1}q+\cdots+d_{m-1}q^{m-1}$ denotes the $q$-ary expansion of $d$. }

(2) Wan and Wang \cite{WW} obtained an index bound for character sums over finite fields; they used the the concept of index of a polynomial over a finite field, which was first introduced in the research of permutation polynomials \cite{AGW}. This bound improved the Weil bound for high degree polynomials with small indices as well as polynomials with large indices that are generated by cyclotomic mapping of small indices. 

(3) Recently, there is an improvement on the Hasse-Weil bound in the characteristic two case by Cramer and Xing \cite{CX}. 
They used the algebraic geometry and the algebraic number theory. 
As an application, they also improved the Weil bound for character sums.

\subsection{Main result and  comparison with previous results} 
\phantom{.}\\

The concept of the {\it index} of a polynomial over a finite field was first introduced in \cite{AGW}.

A polynomial $f(x)\in  \Bbb F_{q}[x] $ of degree $n\le q-1$ can be written as the following form:
$$f(x)=a(x^n+a_{n-i_1}x^{n-i_1}+\cdots+a_{n-i_k}x^{n-i_k})+b,$$
where $a, a_{n-i_j}\neq 0, j=1,\ldots,k$. Let $r$ be the lowest degree of $x$ in $f(x)-b$. The index $ l$ of the polynomial $f(x)$ is defined by 
$$l:=\frac{q-1}{\gcd(n-r,n-r-i_1, \ldots, n-r-i_{k-1},q-1)}.$$ In fact, any non-constant polynomial $f(x)\in  \Bbb F_{q}[x] $ of degree $n \le q-1$ can be written uniquely as $f(x)=a(x^r h(x^{(q-1)/l}))+b$.  

\

{\color{blue} By using the index of polynomials, Wan and Wang \cite{WW} found a new bound, which is called an {\it index bound}, for character sums of polynomials over finite fields as follows:

\begin{thm} \cite[Theorem 1.1]{WW} {\rm
{Let $f(x)=x^r h(x^{(q-1)/l})+b\in \Bbb F_q[x]$ be any polynomial with index $l$. 
Let $\xi$ be a primitive $l$-th root of unity in $\Bbb F_q$ and $n_0=\sharp\{0\le i\le l-1 \mid h(\xi^i)=0\}$. }
Let $\psi: \Bbb F_q \to \mathbb{C} ^*$ be a nontrivial additive character.  
Then 
\begin{equation}\bigg|\sum_{x\in \Bbb F_q}\psi(f(x))-\frac{q}{l}n_0\bigg|\le (l-n_0)\gcd\bigg(r, \; \frac{q-1}{l}\bigg)\sqrt{q}.\end{equation}


 }
 \end{thm}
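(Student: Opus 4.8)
The plan is to exploit the index structure $f(x) = x^r h(x^{(q-1)/l}) + b$ to decompose the character sum according to the cosets of the subgroup of $l$-th powers in $\Bbb F_q^*$. First I would split off the contribution of $x = 0$ (which is $\psi(b)$ and harmless), and then write $\Bbb F_q^*$ as a disjoint union of the $l$ cosets $\xi^i C$ for $i = 0, \dots, l-1$, where $C = \{x^{(q-1)/l} : x \in \Bbb F_q^*\}$ is the image of the $(q-1)/l$-power map; more precisely I would parametrize $x = \xi^i y^{l'}$ type elements, but the cleaner route is to collect $x$ according to the value of $x^{(q-1)/l} \in \mu_l$. On each such piece, $h(x^{(q-1)/l}) = h(\xi^i)$ is a constant $c_i \in \Bbb F_q$, so the inner sum becomes a sum of $\psi(c_i x^r)$ over the fiber $\{x : x^{(q-1)/l} = \xi^i\}$, which I would rewrite (using Gauss/Jacobi-sum type manipulation, or directly via multiplicative characters of order dividing $l$) as a complete Weil sum over a coset, of the monomial-shifted polynomial.

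The key quantitative step is: for each $i$ with $h(\xi^i) = 0$, the inner sum simply contributes $\#\{x : x^{(q-1)/l} = \xi^i\} \cdot \psi(0) = \frac{q-1}{l}$ (up to the $x=0$ bookkeeping), and there are $n_0$ such indices, giving the main term $\frac{q}{l} n_0$ after reconciling the $+1$'s from $x=0$. For each of the remaining $l - n_0$ indices $i$, the inner sum is a character sum of the shape $\sum \psi(c_i x^r)$ over a fiber of size $(q-1)/l$; expressing the indicator of the fiber via the $l$ multiplicative characters whose order divides $l$ and then applying the Weil bound for $\sum_{x \in \Bbb F_q} \chi(x)\psi(c_i x^r)$ (a twisted Kloosterman-type sum with a monomial), one gets each such piece bounded by something like $\gcd(r, (q-1)/l)\sqrt{q}$ after carefully counting how many of the $l$ characters actually interact nontrivially with $x^r$ (this is where the $\gcd(r, (q-1)/l)$ enters — only characters that factor through the $r$-power map survive, or rather the degree of the reduced monomial after substitution is governed by this gcd). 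Summing the $l - n_0$ contributions yields the stated bound $(l-n_0)\gcd(r,(q-1)/l)\sqrt{q}$.

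The main obstacle I anticipate is the bookkeeping in the fiber-sum reduction: correctly handling the substitution $x \mapsto$ (a parametrization of the fiber over $\xi^i$), tracking which multiplicative characters of $\Bbb F_q^*$ contribute, and verifying that the relevant one-variable exponential sum genuinely has the Weil-bound-admissible degree so that the $\gcd(r, (q-1)/l)$ factor — rather than a cruder $r$ — comes out. A secondary subtlety is ensuring the main term is exactly $\frac{q}{l}n_0$ and not $\frac{q-1}{l}n_0$; this requires absorbing the $x=0$ term $\psi(b)$ correctly into whichever fiber it belongs to (the fiber over $\xi^i$ with $c_i = 0$ when such exists, or an extra $\psi(b)$ otherwise), and the triangle inequality must be arranged so these $O(1)$ discrepancies are absorbed into the error term. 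Modulo these care-points, the argument is a fairly direct coset decomposition plus Weil bound on each coset.
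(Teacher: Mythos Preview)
The paper does not prove this statement: Theorem~1.1 is quoted verbatim from Wan--Wang \cite{WW} and used as a black box in the proof of Theorem~1.2. Consequently there is no ``paper's own proof'' to compare your proposal against.

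That said, your outline is the standard argument and is essentially what appears in \cite{WW}. Two remarks. First, the cleanest way to see the factor $\gcd(r,(q-1)/l)$ is not via multiplicative characters but via the substitution you allude to: parametrize the fiber $\{x:x^{(q-1)/l}=\xi^i\}=\alpha_i\cdot\{y^l:y\in\Bbb F_q^*\}$ (an $l$-to-$1$ cover), so that the coset sum becomes $\frac{1}{l}\sum_{y\in\Bbb F_q^*}\psi(c_i\alpha_i^{r}y^{rl})$. Since $\sum_{y\in\Bbb F_q^*}\psi(c\,y^{n})$ depends only on $\gcd(n,q-1)$ and $\gcd(rl,q-1)=l\cdot\gcd(r,(q-1)/l)$, one may replace $rl$ by $l\cdot\gcd(r,(q-1)/l)$ and apply the Weil bound to a polynomial of that degree; dividing by $l$ gives exactly $\gcd(r,(q-1)/l)\sqrt q$ per coset, and the stray constants from $x=0$ and from passing between $\Bbb F_q$ and $\Bbb F_q^*$ cancel on the nose. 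Your multiplicative-character route also works but makes the bookkeeping you flag (``which characters survive'') heavier than necessary.

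Second, your worry about the main term $\frac{q}{l}n_0$ versus $\frac{q-1}{l}n_0$ is well placed; when $b\neq 0$ the natural main term is $\psi(b)\cdot\frac{q}{l}n_0$, and absorbing the phase requires either assuming $b=0$ (harmless, since $\psi(f)=\psi(b)\psi(f-b)$ and $|\psi(b)|=1$ lets one reduce to this case before subtracting the main term) or carrying $\psi(b)$ through the statement. This is a cosmetic point in how the theorem is stated rather than a defect in your argument.
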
}
 
 \mwang{This theorem is sensitive to the form of $f(x)$. Replacing $f(x)$ with $f^{*} (x) = f(x) + d(x)$
  such that  $\sum_{x\in \Bbb F_q}  \psi(d(x)) = 0$, and in particular, for $d(x)$ of the form $y^p -y$, leaves the
  left hand side  unchanged but may lower the upper bound in the right hand side.}

As mentioned before, we can view  a polynomial as a mapping over $\Bbb F_{q} $ when we just consider the Weil sums. Then it is easy to extend the Frobenius mapping $\pi$ over $\Bbb F_q$ to $\Bbb F_{q}[x]$ as follows:
$$\pi : \Bbb F_{q}[x] \to \Bbb F_{q}[x], \pi(a) = a^p, \pi(x)=x^p.$$

There is an equivalence relation between polynomials  in $\Bbb F_{q}[x]$. Given two polynomials $f(x)=\sum_ia_ix^i$ and $g(x)$ in $\Bbb F_{q}[x]$, we write $f\sim g$ if there exists a vector $v=(v_0, v_1,\ldots, v_n)\in \mathbb{Z}_m^{n+1}$ such that $$\pi^v(f)=\pi^{v_0}(a_0)+\pi^{v_1}(a_1x)+\cdots+\pi^{v_n}(a_nx^n)=g.$$ The equivalence class of $f(x) \in \mathbb{F}_q[x]$ is denoted by $[f]$.
For each polynomial $g(x)$ in the equivalence class $[f]$, we  use $l_g$ for  the index of $g(x)$ and $n_g$ for the degree of $g(x)$. For each equivalence class $[f]$, let $l^*$ be defined by  $$l^*=\mbox{min}\{l_g: g\in [f]\}.$$ Then there exists a polynomial $f^*(x)\in [f]$ such that 
$f^*(x)=x^{r^*} h^*(x^{(q-1)/l^*})+b\in \Bbb F_q[x]$ and
the index of  $f^*(x)$ is exactly $l^*.$    We call  $f^*(x)$ the equivalent polynomial of $f(x)$ with the smallest index and $h^*(x)$ the associated polynomial of $f^*(x)$.

\

The following theorem is our main result, which provides a \mwang{new formulation} of index bounds in Theorem 1.1 for character sums of polynomials over finite fields. 

\begin{thm} \label{thm1.1}
Let $f(x)\in \Bbb F_q[x]$ be a polynomial of positive degree $n$ with $\gcd(n,q)=1$. {Let $f^*(x)$ be  the equivalent polynomial of $f(x)$ with the smallest index $l^*$ and $h^*(x)$ the associated polynomial of $f^*(x)$.  Let $\xi$ be a primitive $l^*$-th root of unity in $\Bbb F_q$ and $n_0=\sharp\{0\le i\le l^*-1 \mid  h^*(\xi^i)=0\}$.} Let $\psi: \Bbb F_q \to \mathbb{C} ^*$ be a nontrivial additive character.  Then 
\begin{equation}\label{indexbound}
\bigg|\sum_{x\in \Bbb F_q}\psi(f(x))-\frac{q}{l^*}n_0\bigg|\le (l^*-n_0)\gcd\bigg(r^*, \; \frac{q-1}{l^*}\bigg)\sqrt{q}.\end{equation}

\end{thm}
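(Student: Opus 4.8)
The plan is to reduce Theorem \ref{thm1.1} to Theorem 1.1 of Wan and Wang by exploiting the Frobenius equivalence relation $\sim$. First I would observe that the Weil sum $\sum_{x\in\Bbb F_q}\psi(f(x))$ is an invariant of the equivalence class $[f]$: if $g\sim f$, then $g$ is obtained from $f$ by applying powers of the Frobenius $\pi$ to the individual coefficients, and since $a^p = a^q$-composition with $x\mapsto x$ does not change the polynomial as a \emph{mapping} on $\Bbb F_q$ in the relevant way, one checks that $\psi(g(x))$ and $\psi(f(x))$ run over the same multiset of values as $x$ runs over $\Bbb F_q$. More precisely, replacing a monomial $a_i x^i$ by $\pi^{v_i}(a_i) x^i = a_i^{p^{v_i}} x^i$ and using that $t\mapsto t^{p}$ permutes $\Bbb F_q$, together with the fact that $\psi\circ(\text{Frobenius})$ is again a nontrivial additive character, shows $\big|\sum_x \psi(f(x))\big| = \big|\sum_x \psi(g(x))\big|$; handling the constant term $b$ separately (it only contributes a unit-modulus factor) gives equality of the absolute values after subtracting the respective $\tfrac{q}{l}n_0$ terms, because $n_0$ is determined by the zero pattern of the associated polynomial, which is also $\sim$-invariant up to Frobenius twisting of the roots of unity.

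Next I would apply Theorem 1.1 directly to the specific representative $f^*(x) = x^{r^*} h^*(x^{(q-1)/l^*}) + b \in [f]$, which by construction has index exactly $l^*$. Theorem 1.1 then yields
\[
\bigg|\sum_{x\in\Bbb F_q}\psi(f^*(x)) - \frac{q}{l^*}n_0\bigg| \le (l^* - n_0)\gcd\bigg(r^*,\ \frac{q-1}{l^*}\bigg)\sqrt{q},
\]
where $n_0 = \sharp\{0\le i\le l^*-1 \mid h^*(\xi^i) = 0\}$ with $\xi$ a primitive $l^*$-th root of unity, exactly as in the statement. Combining this with the invariance established in the first step — namely $\sum_x\psi(f(x)) = \sum_x\psi(f^*(x))$, or at least equality of the two sides of \eqref{indexbound} — completes the argument.

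The main obstacle, and the step deserving the most care, is the verification that the left-hand side of \eqref{indexbound} is genuinely unchanged under $\sim$. One must be careful that the equivalence $\pi^v(f)$ acts on coefficients with \emph{independent} Frobenius exponents $v_0,\dots,v_n$ per coefficient, not a uniform one, so the naive substitution $x\mapsto x^{p^{-1}}$ does not immediately work; instead one argues monomial-by-monomial or invokes the known fact (implicit in \cite{WW} and \cite{AGW}) that Frobenius-equivalent polynomials induce the same Weil-sum absolute values. A secondary, minor point is to confirm that $n_0$ as computed from $h^*$ agrees with the quantity appearing when Theorem 1.1 is applied to $f^*$: this is immediate since $f^*$ is literally in the form required by Theorem 1.1 with index $l^*$. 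Once the $\sim$-invariance is in hand, the theorem follows with essentially no further computation, since it is just Theorem 1.1 applied to the optimal representative.
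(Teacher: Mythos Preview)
Your overall strategy --- establish that the Weil sum is unchanged when passing from $f$ to a $\sim$-equivalent polynomial, and then invoke Theorem~1.1 for the optimal representative $f^*$ --- is exactly what the paper does. The gap lies in how you justify the invariance, and it stems from a misreading of the equivalence relation.

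The map $\pi^v$ does \emph{not} act only on coefficients. Since the paper extends Frobenius by $\pi(x)=x^p$, one has $\pi^{v_i}(a_ix^i)=(a_ix^i)^{p^{v_i}}=a_i^{p^{v_i}}x^{\,ip^{v_i}}$, the exponent being read modulo $q-1$. The monomial \emph{degrees} move, which is precisely why $f^*$ can have a strictly smaller index than $f$; under your reading $\pi^{v_i}(a_i)x^i$, every $g\in[f]$ would share the exponent set of $f$, forcing $l^*=l$ and rendering the theorem empty. With the correct definition, neither a global substitution $x\mapsto x^{p^{-1}}$ nor an appeal to a ``known fact implicit in \cite{WW}, \cite{AGW}'' yields the needed invariance, and your hedge ``equality of absolute values'' would not suffice anyway, since the statement subtracts $\tfrac{q}{l^*}n_0$ before taking the modulus.

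The paper's argument is short but uses an idea you are missing. Write $\psi(t)=\psi_1(ct)$ with $\psi_1$ the canonical additive character, so $\psi_1(u)=\zeta_p^{\Tr_{q/p}(u)}$. Because $\Tr_{q/p}(u^{p})=\Tr_{q/p}(u)$, for every $x\in\Bbb F_q$ and every $i$ one has
\[
\Tr_{q/p}\big((ca_ix^i)^{p^{v_i}}\big)=\Tr_{q/p}(ca_ix^i),
\]
and summing over $i$ gives $\psi_1\big(\pi^v(cf)(x)\big)=\psi(f(x))$ \emph{pointwise}. Hence $\sum_x\psi(f(x))=\sum_x\psi_1(g(x))$ for every $g\in[cf]$, an exact equality of sums. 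One then applies Theorem~1.1, with the character $\psi_1$, to the representative $g\in[cf]$ of smallest index; since $f$ and $cf$ have identical index data, this minimum is again $l^*$, with the same $r^*$ and $n_0$, and the bound follows. The step of absorbing $c$ into the polynomial before passing to the equivalence class is what makes the monomial-by-monomial trace argument go through cleanly.
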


\mwang{The following examples show that the upper bound in \eqref{indexbound}  is indeed  an improved bound compared to both the Weil bound in (1.1) and the index bound in Theorem 1.1. }

 \begin{exa} {\rm 
  Let $f(x)=x^{25}+ax^4\in \Bbb F_{27}[x]$, where $a\in \Bbb F_{27}^*$.   {Obviously, the Weil bound is trivial because of the high degree. Since the index of $f$ is $26$, the index bound is also trivial.}
However, by Theorem \ref{thm1.1}, $l^*=\frac{26}{\gcd(25-12,26)}=2$. 
If $a^3  = \pm 1$, 
 then 
$\bigg|\sum_{x\in \Bbb F_{27}}\psi(x^{25}+ax^4)-\frac{27}{2}\bigg|\le \sqrt{27}$;
otherwise, we have $\bigg|\sum_{x\in \Bbb F_{27}}\psi(x^{25}+ax^4) \bigg|\le2\sqrt{27}.$ 


 }
 \end{exa}

  \begin{exa} {\rm 
  Let $f(x)=x^{19}+ax^2\in \Bbb F_{27}[x]$, where $a\in \Bbb F_{27}^*$.   Obviously, the Weil bound and the index bound are both  trivial.
By Theorem \ref{thm1.1}, $l^*=2$.  If \mwang{ $a^3  = \pm 1$, }
 then 
$\bigg|\sum_{x\in \Bbb F_{27}}\psi(x^{19}+ax^2)-\frac{27}{2}\bigg|\le \sqrt{27}$;
otherwise, we have $\bigg|\sum_{x\in \Bbb F_{27}}\psi(x^{19}+ax^2) \bigg|\le2\sqrt{27}.$

}
 
 \end{exa}

 \vskip 0.5cm

Similarly, we define
\begin{equation}\label{eqn:n*}
n^* = \mbox{min}\{n_g: g\in [f]\}.
\end{equation}
Then using the Weil bound,  the same arguments derives
\begin{equation}\label{eqn:reducedweil}
\bigg|\sum_{x\in \Bbb F_q}\psi(f(x)) \bigg| \le (n^*-1)\sqrt{q}.
\end{equation}

The following examples show that the upper bound in \eqref{eqn:reducedweil}  is also an \mwang{improved}  bound {\color{orange} when it is}  compared to  the Weil bound in (1.1),  the index bound in Theorem 1.1, and the upper bound in \eqref{indexbound}.

  \begin{exa} {\rm 
  Let $f(x)=x^{19}+ax^4\in \Bbb F_{27}[x]$, where $a\in \Bbb F_{27}^*$.   Obviously, the Weil bound, the index bound, and the upper bound in \eqref{indexbound} are all trivial.  However, by \eqref{eqn:n*} and \eqref{eqn:reducedweil}, we have  $n^*=5$ and 
 $\bigg|\sum_{x\in \Bbb F_{27}}\psi(x^{19}+ax^4) \bigg|\le4\sqrt{27}.$


 }
 \end{exa}

   \begin{exa} {\rm 
  Let $f(x)=x^{10}+ax^5\in \Bbb F_{27}[x]$, where $a\in \Bbb F_{27}^*$.   Obviously, the Weil bound,   the index bound, and  the upper bound in \eqref{indexbound} are all  trivial.  However, by \eqref{eqn:n*} and \eqref{eqn:reducedweil}, we have  $n^*=5$ and 
 $\bigg|\sum_{x\in \Bbb F_{27}}\psi(x^{10}+ax^5) \bigg|\le4\sqrt{27}.$


 }
 \end{exa}
 
The $p$-cyclotomic coset modulo $q-1$ containing $i$ is defined by $$C_i =\{ip^j \pmod {q-1}: 0\le j <l_i\},$$
where $l_i$ is the smallest positive integer such that $p^{l_i} i\equiv i \pmod {q-1}$, and is the cardinality of $C_i$. The smallest integer in $C_i$ is called the {\it coset leader} of $C_i$. By \cite[Lemma 6]{L}, we also find an infinite family of polynomials, which shows that the upper bound  in \eqref{eqn:reducedweil} is  \mwang{an improved bound}. 
 
\begin{prop} {\rm Let $q=p^m$,  $s$ be an integer with $p<s<\sqrt q$,  and $\gcd(s,p)=1$.  Let $f(x)=x^{sp^{m-1}}+ax^r\in \Bbb F_q[x]$, where $a\in \Bbb F_q^*$ and $r<s$. Then $\gcd(sp^{m-1}-q+1,q)=1$, $n^*=s$, and  $\bigg|\sum_{x\in \Bbb F_q}\psi(f(x)) \bigg| \le (s-1)\sqrt{q}$.

}
\end{prop}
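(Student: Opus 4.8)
The plan is to verify the three assertions in turn, the computation $n^*=s$ being the main point. The arithmetic fact is immediate: the hypothesis $p<s<\sqrt q=p^{m/2}$ forces $m\ge 3$, so both $sp^{m-1}$ and $q$ are divisible by $p$, whence $sp^{m-1}-q+1\equiv 1\pmod p$; as $q$ is a power of $p$ this yields $\gcd(sp^{m-1}-q+1,q)=1$. Note also that $sp^{m-1}-q+1>0$, since $s>p$ gives $sp^{m-1}>p^m=q$; this positivity is what lets one replace the monomial $x^{sp^{m-1}}$ by a monomial of smaller exponent modulo $x^q-x$.

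To compute $n^*$, recall that passing from $f$ to an equivalent polynomial $g=\pi^v(f)$ and reducing modulo $x^q-x$ replaces each monomial $a_ix^{d_i}$ of $f$ by $a_i^{p^{v_i}}x^{e_i}$ with $e_i\equiv d_ip^{v_i}\pmod{q-1}$ and $a_i^{p^{v_i}}\ne 0$; thus, as $v_i$ varies, the exponent of that term runs through the $p$-cyclotomic coset $C_{d_i}$ while the term never disappears. For $f(x)=x^{sp^{m-1}}+ax^r$ the two exponents are $sp^{m-1}$ and $r$, and since $sp^m\equiv s\pmod{q-1}$ we have $C_{sp^{m-1}}=C_s$. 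Here I use \cite[Lemma 6]{L}: because $\gcd(s,p)=1$ and $s<\sqrt q$, the integer $s$ is the coset leader of $C_s$, i.e.\ $s=\min C_s$. Since $\min C_r\le r<s=\min C_s$, the cosets $C_r$ and $C_s$ are disjoint, so in every $g\in[f]$ the two monomials keep distinct exponents $e_1\in C_s$, $e_2\in C_r$ (they cannot coalesce), and hence $n_g=\deg g=\max(e_1,e_2)\ge e_1\ge\min C_s=s$. Conversely, applying $\pi$ to the leading monomial alone (the shift vector with entry $1$ in the position $sp^{m-1}$ and $0$ elsewhere) sends $f$ to $x^{sp^m}+ax^r\equiv x^s+ax^r$, an element of $[f]$ of degree $\max(s,r)=s$ because $r<s$. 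Therefore $n^*=s$.

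Finally, $\gcd(n^*,q)=\gcd(s,p^m)=1$ since $\gcd(s,p)=1$, so the reduced Weil bound \eqref{eqn:reducedweil} is applicable to $f$ and gives $\bigl|\sum_{x\in\Bbb F_q}\psi(f(x))\bigr|\le (n^*-1)\sqrt q=(s-1)\sqrt q$, as claimed. The only step that is not a routine check is the assertion that $s$ is the coset leader of $C_{sp^{m-1}}=C_s$, and this is precisely the role of \cite[Lemma 6]{L}; a self-contained argument would establish $sp^j\bmod(q-1)\ge s$ for every $j\ge 0$ — transparent when $sp^j<q-1$ (the reduced value is then $sp^j$, which is $\ge sp>s$ unless $j=0$), but for the larger indices one needs the bound $s<\sqrt q$ to control how far the modular reduction moves the value, a short but slightly delicate case analysis that I would rather import from the cited lemma.
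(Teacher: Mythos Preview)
Your proof is correct and follows exactly the route the paper indicates: the paper does not spell out a proof of this proposition but simply prefaces it with ``By \cite[Lemma 6]{L}'', and that lemma is precisely the step you isolate (that $s$ is the coset leader of $C_s=C_{sp^{m-1}}$ when $\gcd(s,p)=1$ and $s<\sqrt q$). The remaining verifications you supply --- the congruence $sp^{m-1}-q+1\equiv 1\pmod p$, the disjointness of $C_s$ and $C_r$, the explicit equivalent polynomial $x^s+ax^r$, and the applicability of \eqref{eqn:reducedweil} via $\gcd(s,p)=1$ --- are the natural details the paper leaves implicit.
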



 


%


The rest of this paper is organized as follows. 
In Section 2, we prove Theorem ~\ref{thm1.1}. 
Our main idea of improving the index bound in \cite{WW} is to replace the polynomial over finite fields by using 
the equivalent polynomial with the smallest index. Those equivalent polynomials are obtained  by applying Frobenius automorphisms on individual monomials.   In Section 3, as an application, we count the number of solutions of some algebraic curves by using our main result.

 \section{Proof of Theorem 1.2 and corollaries}

Let $q=p^m$ as before and  $\Bbb F_p$ be a subfield of $\Bbb F_{q}$. 
Let $\pi$ be the {\it Frobenius automorphism} of $\Bbb F_{q}$ over $\Bbb F_p$, which is defined by $\pi : \Bbb F_{q} \to \Bbb F_{q} , \pi(a) = a^p$.
We note that $\pi(a) = a$ if and only if $a \in \Bbb F_p$. 

 \vskip 0.3cm
 
Now, we are ready to give the proof of Theorem 1.2.

\vskip 0.3cm

{\bf Proof of Theorem 1.2.} We write $f(x)=a_0+a_1x+\cdots+c_nx^n$ and $\psi=\psi_1(c)$ for some nonzero $c\in \Bbb F_q$, where $\psi_1$ is the {\it canonical} additive character of $\Bbb F_q$. 
Let $\Tr_{q/p}$ be the trace function from $\Bbb{F}_q$ to $\Bbb{F}_p$.
For each $v=(v_0, v_1, \ldots, v_n)\in \mathbb{Z}_m^{n+1}$,  
\begin{eqnarray*} && \sum_{x\in \Bbb F_q}\psi_1(\pi^v(cf(x)))\\
&=& \sum_{x\in \Bbb F_q}\psi_1 (\pi^{v_0}(ca_0)+\pi^{v_1}(ca_1x)+\cdots+\pi^{v_n}(ca_nx^n))\\
&=& \sum_{x\in \Bbb F_q}\psi_1 ((ca_0)^{p^{v_0}}+(ca_1x)^{p^{v_1}}+\cdots+(ca_nx^n)^{p^{v_n}})\\
&=& \sum_{x\in \Bbb F_q}\zeta_p^{\Tr_{q/p}( (ca_0)^{p^{v_0}}+(ca_1x)^{p^{v_1}}+\cdots+(ca_nx^n)^{p^{v_n}})}\\
&=& \sum_{x\in \Bbb F_q}\zeta_p^{\Tr_{q/p}( ca_0+ca_1x+\cdots+ca_nx^n)}\\
&=&\sum_{x\in \Bbb F_q}\psi(f(x)).
\end{eqnarray*}

Note that $f(x)$ and $cf(x)$ have the same indices. 
Taking a vector $v\in  \mathbb{Z}_m^{n+1}$ such that $f^*(x)=\pi^v(cf(x))$, the result follows from Theorem 1.1. \hfill$\square$


\vskip 0.3cm

Using the concept of the $p$-cyclotomic cosets, for binomial polynomials, we obtain the following corollary, which is a simple version of Theorem 1.2. This also improves and corrects an error in Corollary 1.2 in \cite{WW}. 


 \begin{cor} {\rm Let   $f(x)=x^{n}+ax^{r}\in \Bbb F_{q}[x]$, where $a\in \Bbb F_{q}^*$ and $1\le r<n\le q-1$.  Let $C_r$ be  the $p$-cyclotomic coset modulo $q-1$ containing $r$. 
Suppose that $r^*=rp^k$ for some integer $k$ with $0\le k\le m-1$ such that  $$l^*=
\frac{q-1}{\gcd(n-r^*,q-1)} =
\mbox{min}\bigg\{\frac{q-1}{\gcd(n-j,q-1)}: j\in C_r\bigg\}.$$
 Let $t=\gcd(n,r,q-1)$.  Let $\psi: \Bbb F_q \to \mathbb{C} ^*$ be a nontrivial additive character. {If $x^{n-r^*}+a^{p^k}$ has a solution in  $\Bbb F_q^*$}, then 

\begin{equation}\bigg|\sum_{x\in \Bbb F_q}\psi(x^{n}+ax^{r})-\frac{q}{l^*}\bigg|\le (l^*-1)t\sqrt{q};\end{equation}
otherwise, we have  
\begin{equation}\bigg|\sum_{x\in \Bbb F_q}\psi(x^{n}+ax^{r}) \bigg|\le l^*t\sqrt{q}.\end{equation} }
 \end{cor}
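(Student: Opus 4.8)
The plan is to specialize Theorem~\ref{thm1.1} to the binomial case $f(x)=x^n+ax^r$ and unwind what the data $l^*$, $r^*$, $h^*$, and $n_0$ become in this situation. First I would analyze the equivalence class $[f]$: applying a power of the Frobenius $\pi$ monomial-by-monomial sends $x^n+ax^r$ to (up to the allowed operations) $x^n+a^{p^k}x^{rp^k\bmod(q-1)}$ for various $k$, since the leading exponent $n$ is fixed by the hypothesis while the exponent of the lower term ranges over the $p$-cyclotomic coset $C_r$. Hence the index of a representative with lower exponent $j\in C_r$ is exactly $(q-1)/\gcd(n-j,q-1)$, and by the defining hypothesis the minimum over $j\in C_r$ is attained at $j=r^*=rp^k$, giving $l^*=(q-1)/\gcd(n-r^*,q-1)$. (One should also remark, as the statement does implicitly, that multiplying by a nonzero constant does not change the index, so nothing is lost by restricting to monic-leading representatives.)

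Next I would identify the associated polynomial. Writing $n=r^*+(n-r^*)$ and setting $e=(q-1)/l^*=\gcd(n-r^*,q-1)$, we have $e\mid (n-r^*)$, so $f^*(x)=x^{r^*}h^*(x^{e})$ with $h^*(y)=y^{(n-r^*)/e}+a^{p^k}$. Now the count $n_0=\#\{0\le i\le l^*-1: h^*(\xi^i)=0\}$ must be computed, where $\xi$ is a primitive $l^*$-th root of unity. Since $h^*(\xi^i)=0$ iff $\xi^{i(n-r^*)/e}=-a^{p^k}$, and as $i$ runs over $0,\dots,l^*-1$ the element $\xi^{i(n-r^*)/e}$ runs over the subgroup of $l^*$-th roots of unity raised to the power $(n-r^*)/e$; because $\gcd((n-r^*)/e, l^*)=1$ by construction of $e$, the map $i\mapsto \xi^{i(n-r^*)/e}$ is a bijection of the $l^*$-th roots of unity. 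Therefore $n_0=1$ precisely when $-a^{p^k}$ is itself an $l^*$-th root of unity in $\Bbb F_q$, equivalently when $x^{n-r^*}=-a^{p^k}$, i.e. $x^{n-r^*}+a^{p^k}=0$, has a solution in $\Bbb F_q^*$; and $n_0=0$ otherwise. This is exactly the dichotomy in the two cases of the corollary, with $l^*-n_0$ equal to $l^*-1$ or $l^*$ respectively.

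Finally I would handle the $\gcd$ factor $\gcd(r^*,(q-1)/l^*)=\gcd(r^*,e)$ and show it equals $t=\gcd(n,r,q-1)$. Here I would use that $e=\gcd(n-r^*,q-1)$, so $\gcd(r^*,e)=\gcd(r^*,n-r^*,q-1)=\gcd(r^*,n,q-1)=\gcd(rp^k,n,q-1)$; since $\gcd(n,q)=1$ forces $p\nmid$ the relevant combination in the right way (more precisely, $\gcd(p^k,q-1)=1$), this last gcd equals $\gcd(r,n,q-1)=t$. Plugging $l^*$, $n_0\in\{0,1\}$, and $\gcd(r^*,e)=t$ into \eqref{indexbound} yields exactly (2.1) and (2.2). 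The main obstacle I expect is the bijectivity argument for $n_0$: one has to be careful that the defining minimality of $l^*$ over the coset $C_r$ really does force $\gcd((n-r^*)/e,\,l^*)=1$ (otherwise $r^*$ would not achieve the minimal index, or the representation $x^{r^*}h^*(x^e)$ would not have index exactly $l^*$), and to check the edge cases where $-a^{p^k}=1$ or where $r^*$ and $n$ share factors with $q-1$. Everything else is bookkeeping on top of Theorem~\ref{thm1.1}.
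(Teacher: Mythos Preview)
Your argument is correct and follows the same route as the paper's (apply Theorem~\ref{thm1.1} to the representative $x^{n}+a^{p^{k}}x^{r^{*}}$), only with the computations of $n_{0}$ and of $\gcd(r^{*},e)=t$ spelled out where the paper leaves them implicit. One point needs tightening: the leading exponent $n$ is not ``fixed by the hypothesis''---in $[f]$ both exponents are allowed to range over their $p$-cyclotomic cosets. The paper closes this by observing that for $i=np^{\alpha}\in C_{n}$ and $j=rp^{\beta}\in C_{r}$ one has $\gcd(i-j,q-1)=\gcd(n-rp^{\beta-\alpha},q-1)$ (since $\gcd(p,q-1)=1$), so minimizing over $C_{n}\times C_{r}$ reduces to minimizing over $C_{r}$ with $n$ held fixed; with that remark in place your proof goes through unchanged.
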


 \begin{proof}  
 By Theorem 1.2, we have  $$l^*=\mbox{min}\bigg\{\frac{q-1}{\gcd(i-j,q-1)}: i\in C_n, j\in C_r\bigg\}.$$ Suppose that $i=np^{\alpha}$ and $j=rp^{\beta}$ such that $l^*=\frac{q-1}{\gcd(i-j,q-1)}.$ Then $i-j=p^{\alpha}(n-rp^{\beta-\alpha})$; hence, using Theorem 1.2, the result follows immediately since $\gcd(p,q-1)=1$. In particular, $n_0 =1$ if there exists a solution for $x^{n-r^*} + a^{p^k}$. 
 \end{proof}

 

 In the following corollary, we present some special classes of polynomials for illustration of our main result.

 \begin{cor}  {\rm Let $Q$ be the least prime factor of $q-1$,  $n=\frac{q-1}{Q}$, and $\alpha$ be a positive integer. 
 Let $f(x)=x^{n+p^{\alpha}}+ax\in \Bbb F_{q}[x]$, where $a\in \Bbb F_{q}^*$  and $1< n+p^{\alpha}\le q-1$.   
{If $x^{n}+a^{p^{\alpha}}$ has a solution  in $\Bbb F_q^*$}, then 
\begin{equation*}\bigg|\sum_{x\in \Bbb F_q}\psi(x^{n+p^{\alpha}}+ax)-\frac{q}{Q}\bigg|\le (Q-1)\sqrt{q};\end{equation*}
otherwise, we have 
\begin{equation*}\bigg|\sum_{x\in \Bbb F_q}\psi(x^{n+p^{\alpha}}+ax) \bigg|\le Q\sqrt{q}.\end{equation*}

 }

 \end{cor}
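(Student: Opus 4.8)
The plan is to deduce the statement from Corollary~2.2 applied to the binomial $f(x)=x^{N}+ax$, where $N:=n+p^{\alpha}$ is its degree and $r=1$ is its lowest exponent; essentially all of the work lies in determining the smallest index $l^{*}$.

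First I would verify the standing hypotheses. Since $n=(q-1)/Q$ divides $q-1$, we have $\gcd(n,p)=1$, and because $\alpha\ge 1$ we have $p\mid p^{\alpha}$; hence $N=n+p^{\alpha}\equiv n\not\equiv 0\pmod p$, so $\gcd(N,q)=1$ and Theorem~1.2 (hence Corollary~2.2) applies. Also $t=\gcd(N,r,q-1)=\gcd(N,1,q-1)=1$. Now, by Corollary~2.2, $l^{*}=\min\{(q-1)/\gcd(i-j,q-1):\ i\in C_{N},\ j\in C_{1}\}$, where $C_{1}$ is the $p$-cyclotomic coset of $1$ modulo $q-1$; note $C_{1}$ is closed under multiplication by $p$ and contains $1$, so it contains $p^{k}$ for $k:=\alpha\bmod m$, and $p^{\alpha}\equiv p^{k}\pmod{q-1}$. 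Taking $i=N$ and $j=p^{k}$ gives $i-j\equiv N-p^{\alpha}=n\pmod{q-1}$, whence $\gcd(i-j,q-1)=\gcd(n,q-1)=n=(q-1)/Q$ and the corresponding index value is $Q$; thus $l^{*}\le Q$.

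The crux is the reverse inequality $l^{*}\ge Q$. Suppose some admissible pair $(i,j)$ produced an index value smaller than $Q$. Then $\gcd(i-j,q-1)$ would be a divisor of $q-1$ strictly larger than $(q-1)/Q$; but since $Q$ is the \emph{least} prime factor of $q-1$, the only divisor of $q-1$ exceeding $(q-1)/Q$ is $q-1$ itself, forcing $i\equiv j\pmod{q-1}$ with $i\in C_{N}$ and $j\in C_{1}$. Multiplying by a suitable power of $p$ (a unit modulo $q-1$) this gives $N\equiv p^{c}\pmod{q-1}$ for some $0\le c\le m-1$. As $1<N\le q-1$, the case $N=q-1$ is impossible (powers of $p$ are units mod $q-1$), so $N=p^{c}$ and therefore $n=p^{c}-p^{\alpha}$; positivity of $n$ then forces $c>\alpha\ge 1$, so $p^{\alpha}\mid n$, contradicting $\gcd(n,p)=1$. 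Hence $l^{*}=Q$, attained with $r^{*}=p^{k}$ and $i=N$, so the hypotheses of Corollary~2.2 are met.

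It remains to match the two alternatives of Corollary~2.2 with the ones stated here. Over $\Bbb F_q^{*}$ one has $N-r^{*}=n+p^{\alpha}-p^{k}\equiv n\pmod{q-1}$, so $x^{N-r^{*}}=x^{n}$ as functions on $\Bbb F_q^{*}$; likewise $a^{p^{k}}=a^{p^{\alpha}}$ since $a\in\Bbb F_q^{*}$ and $p^{k}\equiv p^{\alpha}\pmod{q-1}$. Therefore $x^{N-r^{*}}+a^{p^{k}}$ has a root in $\Bbb F_q^{*}$ precisely when $x^{n}+a^{p^{\alpha}}$ does, and substituting $l^{*}=Q$ and $t=1$ into the two displayed inequalities of Corollary~2.2 gives exactly the claimed bounds. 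The only genuinely delicate step is the lower bound $l^{*}\ge Q$, which is precisely where the minimality of $Q$ among the prime divisors of $q-1$ — together with $\alpha\ge 1$ and $n\mid q-1$ — is indispensable.
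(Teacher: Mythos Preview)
Your proof is correct and follows essentially the same route as the paper's: apply the binomial corollary (which the paper labels Corollary~2.1, not~2.2 as you write) and show that $l^{*}=Q$ by using that $(q-1)/Q$ is the largest proper divisor of $q-1$. Your argument is in fact more careful than the paper's terse proof, since you explicitly rule out the degenerate possibility $N\equiv p^{c}\pmod{q-1}$ (which would give $l^{*}=1$), whereas the paper leaves this implicit.
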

 
 \begin{proof}   For each integer $i$ with $0\le i\le m-1$, $\gcd(n+p^{\alpha}-p^i,q-1)$ is a divisor of $q-1$. 
  Since $Q$ is the least prime factor of $q-1$, $\gcd(n,q-1)$ is the largest integer in the set $\big\{\gcd(n+p^{\alpha}-p^i,q-1):i=0,1,\ldots,m-1\big\}$. Hence, we have $l^*=Q$; thus, the result follows  from Corollary 2.1.
 \end{proof}

 

 Let  $n = p_{1}^{\alpha_{1}}p_{2}^{\alpha_{2}} \cdots p_{h}^{\alpha_{h}}$ be the prime factorization, where $p_1,\ldots, p_h$ are distinct primes and  $\alpha_i$ are positive integers for $1\le i\le h$.
 We denote $\mbox{rad}(n) = p_{1} p_{2} \cdots p_{h}$ and  $v_{p_i}(n)=\alpha_i$ for $1\le i\le h$,  {where $v_T$ denotes the $T$-adic valuation.}

 \begin{cor} {\rm 
 Let $p$ be an odd prime number and $n$ be  an odd positive integer such that $\gcd(n,p-1)=1$ and $2\mbox{rad}(n)=\mbox{rad}(p+1)$. 
 Let    $f(x)=x^{n+p}+a^{}x^{}\in \Bbb F_{p^2}[x]$, where $a\in \Bbb F_{p^2}^*$  and $1< n+p \le p^2-1$.   
 Let $s=\frac{p^2-1}{\gcd(n,p+1)}$.  {If $x^{n}+a^{p}$ has a solution in $\Bbb F_{p^2}$}, then 
\begin{equation*}\bigg|\sum_{x\in \Bbb F_{p^2}}\psi(x^{n+p}+ax)-\frac{p^2}{s}\bigg|\le (s-1)p;
\end{equation*}
otherwise, we have 
\begin{equation*}\bigg|\sum_{x\in \Bbb F_{p^2}}\psi(x^{n+p}+ax) \bigg|\le sp.\end{equation*}
 }
 \end{cor}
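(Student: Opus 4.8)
The plan is to deduce this statement from Corollary 2.1 applied to the binomial $f(x)=x^{n+p}+ax\in\Bbb F_{p^2}[x]$, that is, with degree $N=n+p$, lower exponent $r=1$, and $q=p^2$, $m=2$. First I would record that $\gcd(N,q)=1$: from $2\,\mbox{rad}(n)=\mbox{rad}(p+1)$ we get $\mbox{rad}(n)\mid p+1$, and since $p\nmid p+1$ this forces $p\nmid n$, hence $p\nmid n+p$, so Theorem 1.2 (and thus Corollary 2.1) applies. Since $m=2$, the $p$-cyclotomic coset modulo $p^2-1$ containing $1$ is $C_1=\{1,p\}$, so Corollary 2.1 gives
\[
l^*=\min\left\{\frac{p^2-1}{\gcd(n+p-1,\,p^2-1)},\ \frac{p^2-1}{\gcd(n,\,p^2-1)}\right\},
\]
with $r^*=p^{k}$ for the index $k\in\{0,1\}$ at which this minimum is attained.

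The crux is evaluating the two gcd's. Write $p^2-1=(p-1)(p+1)$. Since $\gcd(n,p-1)=1$, no prime dividing $p-1$ divides $n$, so $\gcd(n,p^2-1)=\gcd(n,p+1)$, and the second term is exactly $s$. For the first term I would show $\gcd(n+p-1,\,p^2-1)=1$: writing $n+p-1=n+(p-1)$ shows it is coprime to $p-1$, so any prime $\ell$ dividing $\gcd(n+p-1,p^2-1)$ must divide $p+1$, hence also $(n+p-1)-(p+1)=n-2$; if $\ell$ is odd, then $\ell\mid p+1$ gives $\ell\mid\mbox{rad}(p+1)=2\,\mbox{rad}(n)$, so $\ell\mid n$, and together with $\ell\mid n-2$ this forces $\ell\mid 2$, impossible; if $\ell=2$, then $2\mid n-2$ contradicts $n$ being odd. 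Hence the first term equals $p^2-1\ge s$, so the minimum is $s$, attained with $r^*=p$ and $k=1$; moreover $t=\gcd(n+p,1,p^2-1)=1$.

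Plugging these into Corollary 2.1, the solvability hypothesis reads ``$x^{N-r^*}+a^{p^{k}}=x^{n}+a^{p}$ has a root in $\Bbb F_{p^2}^*$''; since $a\neq0$ forces any such root to be nonzero, this is equivalent to having a root in $\Bbb F_{p^2}$, matching the corollary's hypothesis. In that case one obtains the main term $q/l^*=p^2/s$ and the error bound $(l^*-1)t\sqrt{q}=(s-1)p$; otherwise $n_0=0$ and the bound is $l^*t\sqrt{q}=sp$, as claimed. The only genuinely nontrivial point is the arithmetic identity $\gcd(n+p-1,p^2-1)=1$: it is precisely here that all three hypotheses on $n$ (odd, coprime to $p-1$, and $2\,\mbox{rad}(n)=\mbox{rad}(p+1)$) are used, to ensure that the ``wrong'' coset representative $j=1$ contributes a trivial gcd and so does not lower $l^*$ below $s$; everything else is bookkeeping within the framework of Corollary 2.1.
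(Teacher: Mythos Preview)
Your proof is correct and follows the same approach as the paper: apply Corollary 2.1 with $q=p^2$ and the coset $C_1=\{1,p\}$, and verify that $\gcd(n+p-1,p^2-1)=1$ while $\gcd(n,p^2-1)=\gcd(n,p+1)$, so that $l^*=s$ with $k=1$. The paper simply asserts these two gcd identities without justification, whereas you actually prove them (and also check $p\nmid n+p$ and reconcile the ``solution in $\Bbb F_{p^2}^*$'' versus ``in $\Bbb F_{p^2}$'' discrepancy), so your argument is a strictly more complete version of the paper's.
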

 
 \begin{proof} We note that $\gcd(n+p-1,p^2-1)=1$ and $\gcd(n,p^2-1)=\gcd(n,p+1)>1$. 
 Hence, we have $l^*=s<l=p^2-1$. 
 The result thus follows immediately from Corollary 2.1.
   \end{proof}

 \begin{exa}
 {\rm  We present more polynomials $x^n+ax^r\in \Bbb F_{p^m}[x]$ where $a\in \Bbb F_{p^m}^*$ to illustrate our main result for  small primes $p\le 5$ as listed in Tables 1-3.
In the tables, $\ast$ indicates that the bound {is trivial}.  
 }
 
 \end{exa}
 \begin{table} [h]
\caption{ $p=2$}  
\begin{tabu} to 1\textwidth{X[1,c]|X[1,c]|X[1,c]|X[1.5,c]|X[1.6,c]|X[1.5,c]} 
\hline 
$m$&$n$&$r$&Weil bound& Index bound&Our bound\\ 
\hline \hline 
$4$&$13$&$4$&${\ast}$& ${\ast}$&12\\ 
\hline 
$6$&$41$&$5$&${\ast}$& 56&24\\ 
\hline
$6$&$43$&$25$&${\ast}$& 56&24\\ 
\hline
$6$&$53$&$25$&${\ast}$& ${\ast}$&24\\ 
\hline
$8$&$57$&$12$&${\ast}$&$ {\ast}$&80\\ 
\hline
$8$&$63$&$3$&${\ast}$&$ {\ast}$&80\\ 
\hline
\end{tabu}  
\end{table}

 \begin{table} [h]
\caption{$p=3$}  
\begin{tabu} to 1\textwidth{X[1,c]|X[1,c]|X[1,c]|X[1.5,c]|X[1.6,c]|X[1.5,c]} 
\hline 
$m$&$n$&$r$&Weil bound& Index bound&Our bound\\ 
\hline \hline 
$2$&$7$&$1$&${\ast}$& ${\ast}$&6\\ 
\hline
$3$&$19$&$2$&$ {\ast}$& ${\ast}$&6$\sqrt 3$\\ 
\hline
$4$&$44$&$28$&${\ast}$& 45&18\\ 
\hline
$4$&$46$&$18$&${\ast}$&${\ast}$&36\\ 
\hline
$5$&$154$&$11$& ${\ast}$& ${\ast}$&18$\sqrt 3$\\ 
\hline
$6$&$107$&$9$& ${\ast}$&${\ast}$&189\\ 
\hline
$6$&$122$&$18$& ${\ast}$ & 378&108\\ 
\hline 
\end{tabu}  
 
\end{table}

 \begin{table} [h]
\caption{$p=5$}  
\begin{tabu} to 1\textwidth{X[1,c]|X[1,c]|X[1,c]|X[1.5,c]|X[1.6,c]|X[1.5,c]} 
\hline 
$m$&$n$&$r$&Weil bound& Index bound&Our bound\\ 
\hline \hline 
$2$&$14$&$10$&${\ast}$& ${\ast}$&20\\ 
\hline
$2$&$19$&$11$&${\ast}$& 15&10\\ 
\hline
$3$&$33$&$10$& ${\ast}$& ${\ast}$&20$\sqrt5$\\ 
\hline
$3$&$77$&$3$& ${\ast}$& ${\ast}$&10$\sqrt5$\\ 
\hline
$4$&$42$&$10$&${\ast}$&${\ast}$&150\\ 
\hline
$4$&$314$&$50$&${\ast}$&${\ast}$&100\\ 
\hline 

\end{tabu}  
 
\end{table}

\section{An Application}

In this section, as an application of Theorem 1.2, {\color{blue} we give an estimation of} the number of solutions  for  some algebraic curves.



Let $f(x)\in \Bbb F_{q^m}[x]$ be a polynomial and $N_{f,q^m}$ be the number of solutions $(x,y)\in \Bbb F_{q^m}^2$ of an {\it Artin-Schreier} equation $y^q-y=f(x)$. Then $$N_{f,q^m}=\sum_{\psi_m}\sum_{x\in \Bbb F_{q^m}}\psi_m(f(x)),$$
where the outer sum runs over all additive characters $\psi$ of $\Bbb F_q$ and $\psi_m(x)=\psi(\Tr _{q^m/q}(x))$.

If $f(x)$ has degree $n$ and $\gcd(n,q)=1$, then we have the well known Weil bound:
\begin{equation} \label{WeilCurveBound}
\bigg|N_{f,q^m}-q^m\bigg|\le (q-1)(n-1)\sqrt{q^m}.
\end{equation}

\vskip 0.3cm
By Theorem 1.2, we have the following corollary.


\begin{cor} {\rm Let $f(x)\in \Bbb F_{q^m}[x]$ be a polynomial of degree $n$ with $\gcd(n,q)=1$. Let $N_{f,q^m}$ be the number of solutions $(x,y)\in \Bbb F_{q^m}^2$ of an Artin-Schreier equation $y^q-y=f(x)$.  Then we get
   \begin{equation*} 
\bigg|N_{f,q^m}-q^m-\frac{(q-1)q^mn_0}{l^*}\bigg|\le  (q-1)(l^*-n_0)\gcd\bigg(r^*,\frac{q^m-1}{l^*}\bigg)\sqrt{q^m}.
\end{equation*}

}
\end{cor}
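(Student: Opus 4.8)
The plan is to reduce the count $N_{f,q^m}$ to a sum of character sums over $\Bbb F_{q^m}$, each of which can be estimated by Theorem~\ref{thm1.1}. First I would recall the orthogonality identity
\[
N_{f,q^m}=\sum_{\psi_m}\sum_{x\in \Bbb F_{q^m}}\psi_m(f(x)),
\]
where the outer sum runs over all additive characters $\psi$ of $\Bbb F_q$ and $\psi_m=\psi\circ\Tr_{q^m/q}$, as stated in the text just before the corollary. The trivial character contributes exactly $q^m$ (the full count of $x$), so I would split
\[
N_{f,q^m}-q^m=\sum_{\psi\neq\psi_0}\sum_{x\in \Bbb F_{q^m}}\psi_m(f(x)),
\]
a sum over the $q-1$ nontrivial characters $\psi_m$ of $\Bbb F_{q^m}$ of the form $\psi\circ\Tr_{q^m/q}$.

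Next I would apply Theorem~\ref{thm1.1} to each inner sum. The key point is that $f(x)\in\Bbb F_{q^m}[x]$ has degree $n$ with $\gcd(n,q)=1$, hence also $\gcd(n,p)=1$, so Theorem~\ref{thm1.1} applies over the field $\Bbb F_{q^m}$ with $q$ there replaced by $q^m$. Crucially, the equivalent polynomial $f^*$ with smallest index $l^*$, its parameter $r^*$, the associated polynomial $h^*$ and the quantity $n_0=\#\{0\le i\le l^*-1: h^*(\xi^i)=0\}$ (with $\xi$ a primitive $l^*$-th root of unity in $\Bbb F_{q^m}$) are all intrinsic to $f$ and the field $\Bbb F_{q^m}$, and do not depend on which nontrivial additive character $\psi_m$ we choose. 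Therefore, for every nontrivial $\psi_m$,
\[
\bigg|\sum_{x\in\Bbb F_{q^m}}\psi_m(f(x))-\frac{q^m}{l^*}n_0\bigg|\le (l^*-n_0)\gcd\bigg(r^*,\frac{q^m-1}{l^*}\bigg)\sqrt{q^m}.
\]
Summing the main terms over the $q-1$ nontrivial characters gives $\dfrac{(q-1)q^m n_0}{l^*}$, and the triangle inequality applied to the $q-1$ error terms gives the stated bound $(q-1)(l^*-n_0)\gcd\big(r^*,\frac{q^m-1}{l^*}\big)\sqrt{q^m}$. Combining with the $-q^m$ from the trivial character yields exactly
\[
\bigg|N_{f,q^m}-q^m-\frac{(q-1)q^m n_0}{l^*}\bigg|\le (q-1)(l^*-n_0)\gcd\bigg(r^*,\frac{q^m-1}{l^*}\bigg)\sqrt{q^m}.
\]

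The only subtlety I anticipate — and it is a mild one — is making sure that the hypotheses of Theorem~\ref{thm1.1} are genuinely satisfied over $\Bbb F_{q^m}$: namely that $\gcd(n,q^m)=1$ follows from $\gcd(n,q)=1$ (immediate since $q$ and $q^m$ have the same prime divisor $p$), and that the Frobenius-equivalence construction defining $f^*$, $l^*$, $r^*$, $h^*$, $n_0$ is carried out with respect to $\Bbb F_{q^m}$ rather than $\Bbb F_q$. Once that bookkeeping is fixed, the argument is just orthogonality plus $q-1$ applications of Theorem~\ref{thm1.1} and the triangle inequality; there is no real obstacle. One should also note in passing that when $n_0=0$ the bound specializes to a reduced Weil-type estimate, consistent with \eqref{WeilCurveBound}, and that Theorem~\ref{thm1.1} always improves on \eqref{WeilCurveBound} precisely when it improves the single-sum Weil bound, since the factor $q-1$ is common to both.
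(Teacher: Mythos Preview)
Your approach is exactly the one the paper intends: the paper states only ``By Theorem~1.2, we have the following corollary,'' and your proof is the natural unpacking of that sentence --- use the orthogonality expression for $N_{f,q^m}$ given just before the corollary, separate the trivial character, apply Theorem~\ref{thm1.1} (over $\Bbb F_{q^m}$) to each of the $q-1$ nontrivial character sums, and finish with the triangle inequality. Your remark that $l^*$, $r^*$, $h^*$, $n_0$ are defined purely from $f$ and hence are independent of the particular nontrivial $\psi_m$ is the only point requiring care, and it is handled by the very statement of Theorem~\ref{thm1.1}.
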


\begin{exa} {\rm Let $f(x)=x^{13}+ax\in \Bbb F_{16^2}[x]$ with $a\in \Bbb F_{16^2}^*$. Let $N_{f,16^2}$ be the number of solutions $(x,y)\in \Bbb F_{16^2}^2$ of an Artin-Schreier equation $y^{16}-y=f(x)$. 
Note that the $2$-cyclotomic coset $C_{13}$ modulo 255 is given by $C_{13}=\{13, 26, 52, 67, 104,134\}$.
By Corollary 3.1,  $l^*=\frac{255}{\gcd(52-1,255)}=5. $ Then we get
\begin{equation*} 
\bigg|N_{f,16^2}-16^2\bigg|\le  15\cdot 5\cdot 16,
\end{equation*} except  {the case when $x^{51}+a$ has a solution in  $\Bbb F_{16^2}$}, in which
case, we have
\begin{equation*} 
\bigg|N_{f,16^2}-16^2-\frac{15\cdot16}{5}\bigg|\le  15\cdot 4\cdot 16.
\end{equation*} 
By Magma program, $N_{f,16^2}=1024$ if $a^5=1$ and $N_{f,16^2}=256$  otherwise.  
Despite that the above bounds are not close to the reality for $a^5 \neq 1$,   our bounds are still an improvement of  \eqref{WeilCurveBound} and the above second bound is pretty good when $a^5=1$.

}
\end{exa}

\vskip 0.3cm

The following corollary is obtained from Corollary 2.2.

\begin{cor}  {\rm Let $Q$ be the least prime factor of $q^m-1$ and $n=\frac{q^m-1}{Q}$. 
Let $\alpha$ and $r$ be two positive integers, and  $f(x)=x^{n+p^{\alpha}}+ax\in \Bbb F_{q^m}[x]$, where $a\in \Bbb F_{q^m}^*$ and $1< n+p^{\alpha}\le q^m-1$.  
Let $N_{f,q^m}$ be the number of solutions $(x,y)\in \Bbb F_{q^m}^2$ of an Artin-Schreier equation $y^q-y=f(x)$.  {If $x^{n}+a^{p^{\alpha}}$ has a solution  in $\Bbb F_{q^m}$}, then we get 
{
\begin{equation*}\bigg|N_{f,q^m}-q^m-\frac{(q-1)q^m}{Q}\bigg|\le (q-1)(Q-1)\sqrt{q^m};\end{equation*}
}
otherwise, we have 
\begin{equation*}\bigg|N_{f,q^m}-{q^m}\bigg|\le (q-1)Q\sqrt{q^m}.\end{equation*}
 
 }

 \end{cor}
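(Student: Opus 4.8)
The plan is to express $N_{f,q^m}$ as a sum, over the additive characters of $\Bbb F_q$, of Weil sums over $\Bbb F_{q^m}$, and then feed each of those Weil sums into Corollary 2.2; this mirrors the derivation of Corollary 3.1 from Theorem 1.2, with Corollary 2.2 in place of Theorem 1.2.

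Recall from the beginning of this section that $N_{f,q^m}=\sum_{\psi}\sum_{x\in\Bbb F_{q^m}}\psi_m(f(x))$, where $\psi$ runs over the $q$ additive characters of $\Bbb F_q$ and $\psi_m=\psi\circ\Tr_{q^m/q}$; this is orthogonality applied to the $\Bbb F_q$-linear map $y\mapsto y^q-y$ on $\Bbb F_{q^m}$, whose kernel is $\Bbb F_q$ and whose image is $\ker\Tr_{q^m/q}$. The trivial character $\psi_0$ contributes exactly $q^m$, while for each of the remaining $q-1$ characters $\psi$ the lift $\psi_m$ is a nontrivial additive character of $\Bbb F_{q^m}$ because $\Tr_{q^m/q}$ is surjective. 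Hence
\begin{equation*}
N_{f,q^m}-q^m=\sum_{\psi\neq\psi_0}\sum_{x\in\Bbb F_{q^m}}\psi_m(f(x)).
\end{equation*}

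Next I would check that Corollary 2.2 applies to each inner sum, read over the field $\Bbb F_{q^m}$ (of the same characteristic $p$), with $Q$ the least prime factor of $q^m-1$ and $(q^m-1)/Q$ playing the role of the parameter called ``$n$'' there. Since $(q^m-1)/Q$ divides $q^m-1$ we have $p\nmid(q^m-1)/Q$, hence $\gcd\bigl(n+p^{\alpha},q\bigr)=1$; the inequality $1<n+p^{\alpha}\le q^m-1$ is assumed; and because $a^{p^{\alpha}}\neq 0$, the binomial $x^{n}+a^{p^{\alpha}}$ has a root in $\Bbb F_{q^m}$ exactly when it has one in $\Bbb F_{q^m}^{*}$, so the hypothesis stated here is the one required by Corollary 2.2. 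Consequently each nontrivial $\psi_m$ satisfies $\bigl|\sum_{x\in\Bbb F_{q^m}}\psi_m(f(x))-q^m/Q\bigr|\le(Q-1)\sqrt{q^m}$ when $x^{n}+a^{p^{\alpha}}$ has a root in $\Bbb F_{q^m}$, and $\bigl|\sum_{x\in\Bbb F_{q^m}}\psi_m(f(x))\bigr|\le Q\sqrt{q^m}$ otherwise.

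Finally I would sum these $q-1$ per-character estimates and apply the triangle inequality to the displayed identity: in the first case one obtains $\bigl|N_{f,q^m}-q^m-(q-1)q^m/Q\bigr|\le(q-1)(Q-1)\sqrt{q^m}$, and in the second $\bigl|N_{f,q^m}-q^m\bigr|\le(q-1)Q\sqrt{q^m}$, which are the two asserted bounds. I do not expect any genuine obstacle; once Corollary 2.2 is available the argument is routine, and the only points needing a word of justification are that precisely $q-1$ of the lifted characters are nontrivial and that the solvability of $x^{n}+a^{p^{\alpha}}$ is unaffected by discarding $x=0$. One could instead read the statement off Corollary 3.1 directly, using that for $f(x)=x^{n+p^{\alpha}}+ax\in\Bbb F_{q^m}[x]$ one has $l^{*}=Q$, $\gcd\bigl(r^{*},(q^m-1)/l^{*}\bigr)=1$, and $n_0\le 1$.
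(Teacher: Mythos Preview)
Your proposal is correct and follows exactly the approach the paper has in mind: the paper simply states that the corollary ``is obtained from Corollary 2.2,'' leaving implicit precisely the character-sum decomposition and triangle-inequality argument you spell out (the same one used to derive Corollary 3.1 from Theorem 1.2). Your additional checks --- that the $q-1$ lifted characters are nontrivial, that $\gcd(n+p^{\alpha},p)=1$, and that solvability of $x^{n}+a^{p^{\alpha}}$ in $\Bbb F_{q^m}$ versus $\Bbb F_{q^m}^{*}$ coincide --- are all valid and fill in exactly what the paper omits.
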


  The following corollary is obtained from Proposition  1.7.

 \begin{cor} {\rm Let $s$ be an integer with $p<s<\sqrt {q^m}$ and $\gcd(s,p)=1$.  Let $f(x)=x^{sq^m/p}+ax^r\in \Bbb F_{q^m}[x]$, where $a\in \Bbb F_{q^m}^*$ and $r<s$.  Let $N_{f,q^m}$ be the number of solutions $(x,y)\in \Bbb F_{q^m}^2$ of an Artin-Schreier equation $y^q-y=f(x)$. Then 
\begin{equation*}\bigg|N_{f,q^m}-{q^m}\bigg|\le (q-1)(s-1)\sqrt{q^m}.\end{equation*}

}
\end{cor}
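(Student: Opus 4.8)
The plan is to deduce this directly from Proposition 1.7 together with the character-sum expansion of $N_{f,q^m}$ that opens Section 3. First I would note that the relevant curve equation is $y^q - y = f(x)$ with $f(x) = x^{sq^m/p} + a x^r \in \Bbb F_{q^m}[x]$, so that
\begin{equation*}
N_{f,q^m} = \sum_{\psi_m}\sum_{x\in \Bbb F_{q^m}}\psi_m(f(x)),
\end{equation*}
where the outer sum ranges over the $q$ additive characters $\psi$ of $\Bbb F_q$ (pulled back via $\Tr_{q^m/q}$ to additive characters $\psi_m$ of $\Bbb F_{q^m}$). The trivial character contributes exactly $q^m$ to the sum. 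It therefore suffices to bound each of the remaining $q-1$ inner sums $\bigl|\sum_{x\in \Bbb F_{q^m}}\psi_m(f(x))\bigr|$ and multiply by $q-1$.

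Next I would check that Proposition 1.7 applies to each nontrivial $\psi_m$ with the base field taken to be $\Bbb F_{q^m}$ in place of $\Bbb F_q$. Writing $q^m = p^{mk}$ if $q = p^k$, the exponent $sq^m/p = s\,p^{mk-1}$ is exactly of the form $s\,p^{(\text{field degree})-1}$ required by the proposition, and the hypotheses $p < s < \sqrt{q^m}$, $\gcd(s,p)=1$, and $r < s$ are precisely those assumed in Corollary 3.4. Proposition 1.7 then gives both $\gcd(sq^m/p - (q^m-1), q^m) = 1$ — so that $f$ has degree coprime to the characteristic and the curve-count expansion is legitimate — and the uniform estimate $\bigl|\sum_{x\in \Bbb F_{q^m}}\psi_m(f(x))\bigr| \le (s-1)\sqrt{q^m}$ for every nontrivial additive character $\psi_m$ of $\Bbb F_{q^m}$.

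Combining these, I would estimate
\begin{equation*}
\bigl|N_{f,q^m} - q^m\bigr| = \Bigl|\sum_{\psi_m \neq \psi_0}\sum_{x\in \Bbb F_{q^m}}\psi_m(f(x))\Bigr| \le \sum_{\psi_m \neq \psi_0}(s-1)\sqrt{q^m} = (q-1)(s-1)\sqrt{q^m},
\end{equation*}
which is the claimed bound. The only genuine point requiring care — and the step I expect to be the main (mild) obstacle — is verifying that Proposition 1.7's conclusion really is uniform over all nontrivial $\psi_m$: the proposition is stated for "a nontrivial additive character $\psi$," and one must observe that $n^* = s$ (the minimal degree in the Frobenius equivalence class of $f$) depends only on $f$, not on the chosen character, so that the bound $(n^*-1)\sqrt{q^m} = (s-1)\sqrt{q^m}$ holds simultaneously for all of them. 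Given that, the summation over characters is routine and the corollary follows.
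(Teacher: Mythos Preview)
Your argument is correct and matches the paper's approach: the paper simply states that the corollary ``is obtained from Proposition 1.7,'' and you have spelled out exactly that---the character-sum expansion of $N_{f,q^m}$, the contribution $q^m$ from the trivial character, and the triangle inequality over the $q-1$ nontrivial $\psi_m$ with Proposition~1.7 applied over $\Bbb F_{q^m}$. One minor quibble: the identity $N_{f,q^m}=\sum_{\psi_m}\sum_{x}\psi_m(f(x))$ is pure character orthogonality and requires no hypothesis on $\deg f$, so the gcd assertion from Proposition~1.7 is not what ``legitimizes'' the expansion (it rather records that the reduced-degree polynomial has degree prime to $p$ so that the Weil bound applies), but this does not affect the validity of your proof.
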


\subsection*{Acknowledgements}
The authors thank the anonymous  reviewers for their  helpful suggestions.


\end{document}